\def\BibTeX{{\rm B\kern-.05em{\sc i\kern-.025em b}\kern-.08em
    T\kern-.1667em\lower.7ex\hbox{E}\kern-.125emX}}
\DeclareMathOperator{\E}{\mathbb{E}}
\newtheorem{theorem}{\textbf{Theorem}}
\newenvironment{proof}{\noindent\textit{Proof}: }{\hfill$\square$\par}
\begin{document}
\title{Online V2X Scheduling for Raw-Level \\ Cooperative Perception
} 

\author{\IEEEauthorblockN{Yukuan Jia, Ruiqing Mao, Yuxuan Sun, Sheng Zhou, Zhisheng Niu}
\IEEEauthorblockA{Beijing National Research Center for Information Science and Technology\\
Center for Intelligent Connected vehicles and Transportation, Tsinghua University\\
Department of Electronic Engineering, Tsinghua University, Beijing, P.R. China\\
Email: \{jyk20@mails., mrq20@mails., sunyuxuan@, sheng.zhou@, niuzhs@\}tsinghua.edu.cn}
\thanks{This work is sponsored in part by the National Key R\&D Program of China No. 2020YFB1806605, by the Nature Science Foundation of China (No. 62022049, No. 61871254, No. 61861136003), by the China Postdoctoral Science Foundation No. 2020M680558, and by the project of Tsinghua University-Toyota Joint Research Center for AI Technology of Automated Vehicle (No. TTAD2021-08). }
}

\maketitle

\begin{abstract}
Cooperative perception of connected vehicles comes to the rescue when the field of view restricts stand-alone intelligence. 
While raw-level cooperative perception preserves most information to guarantee accuracy, it is demanding in communication bandwidth and computation power. 
Therefore, it is important to schedule the most beneficial vehicle to share its sensor in terms of supplementary view and stable network connection. 
In this paper, we present a model of raw-level cooperative perception and formulate the energy minimization problem of sensor sharing scheduling as a variant of the Multi-Armed Bandit (MAB) problem. 
Specifically, volatility of the neighboring vehicles, heterogeneity of V2X channels, and the time-varying traffic context are taken into consideration. 
Then we propose an online learning-based algorithm with logarithmic performance loss, achieving a decent trade-off between exploration and exploitation. 
Simulation results under different scenarios indicate that the proposed algorithm quickly learns to schedule the optimal cooperative vehicle and saves more energy as compared to baseline algorithms.
\end{abstract}


\section{Introduction}
In the area of autonomous driving, the stand-alone intelligence has fundamental limitations due to single perception viewpoint. 
For example, the autonomous vehicle is unable to detect a pedestrian occluded by other vehicles or in the blind zone, which causes potential danger.
Thanks to the Vehicle-to-Everything (V2X) network\cite{zeadally2020vehicular}, \textit{connected vehicles} will be a promising solution to occlusions and blind zones, by aggregating sensor data collected from different views. 
The vehicles can exchange their sensor data and \textit{cooperatively} perceive the targets, thus improving the reliability of perception\cite{yang2021machine}. 

There are three levels of \textit{cooperative perception} (CP), a.k.a. collective perception, which differs in how the data from multiple sources are combined. 
As one of the pioneering approaches to raw-level CP, Cooper \cite{chen2019cooper} realizes cooperative perception by exchanging raw 3D point cloud data, outperforming individual perception with extended sensing area and improved detection confidence.
For feature-level, Graph Neural Network (GNN) is leveraged in \cite{wang2020v2vnet} to aggregate lightweight feature data from multiple nearby vehicles, maintaining high accuracy while reducing communication bandwidth requirements. 
Object-level CP, in which only detection results are exchanged, is demonstrated in \cite{kim2015impact}, including see-through forward collision warning, overtaking, and lane-changing assistance scenarios. 
Ref. \cite{higuchi2019value} anticipates the value of information, and schedules only the most important data to avoid network congestion. 
In addition, a standard of object-level CP has been released by ETSI\cite{etsi}.

Among the three types of CP, raw-level CP has the advantage of preserving all the information so as to better exploit the additional view to the full potential. 
Therefore, we stick to the raw-level and also incorporate feature-level as a patch to reduce communication cost. 
Note that when there are multiple candidates to share their sensor data, it is challenging to predict how one view would assist the detection of another, due to unclear occlusion relationship, heterogeneous sensor qualities, as well as the black-box feature of deep neural networks.
To address this issue, we adopt a learning-based framework \cite{sun2019adaptive}. 
By trial and error, the ego vehicle gradually learns which vehicle can provide a superior supplementary view while the wireless connection is satisfactory at the same time. 

The computation system for autonomous driving could take over a thousand watts, reducing the mileage of a vehicle by up to 30\%\cite{liu2020computing}. 
In fact, the computation load varies significantly with different AI models. 
For example, YOLOX-X\cite{ge2021yolox} costs $282$ giga floating-point operations (GFLOPs) while its lightweight version, YOLOX-Tiny, costs only $6.45$ GFLOPs. 
The detection performance is affected by the model size, as shown in Fig. \ref{yolox}. 
In order to be energy-efficient, our AI model for the cooperative perception task is dynamic\cite{han2021dynamic}.
A policy network decides which layer or block in the model to activate given the input data from multiple sources.
Consequently, the computation load of power-consuming AI models can be reduced, achieving energy-efficient environmental perception.

\begin{figure}[!t]
	\centering
	\includegraphics[width=7.3cm]{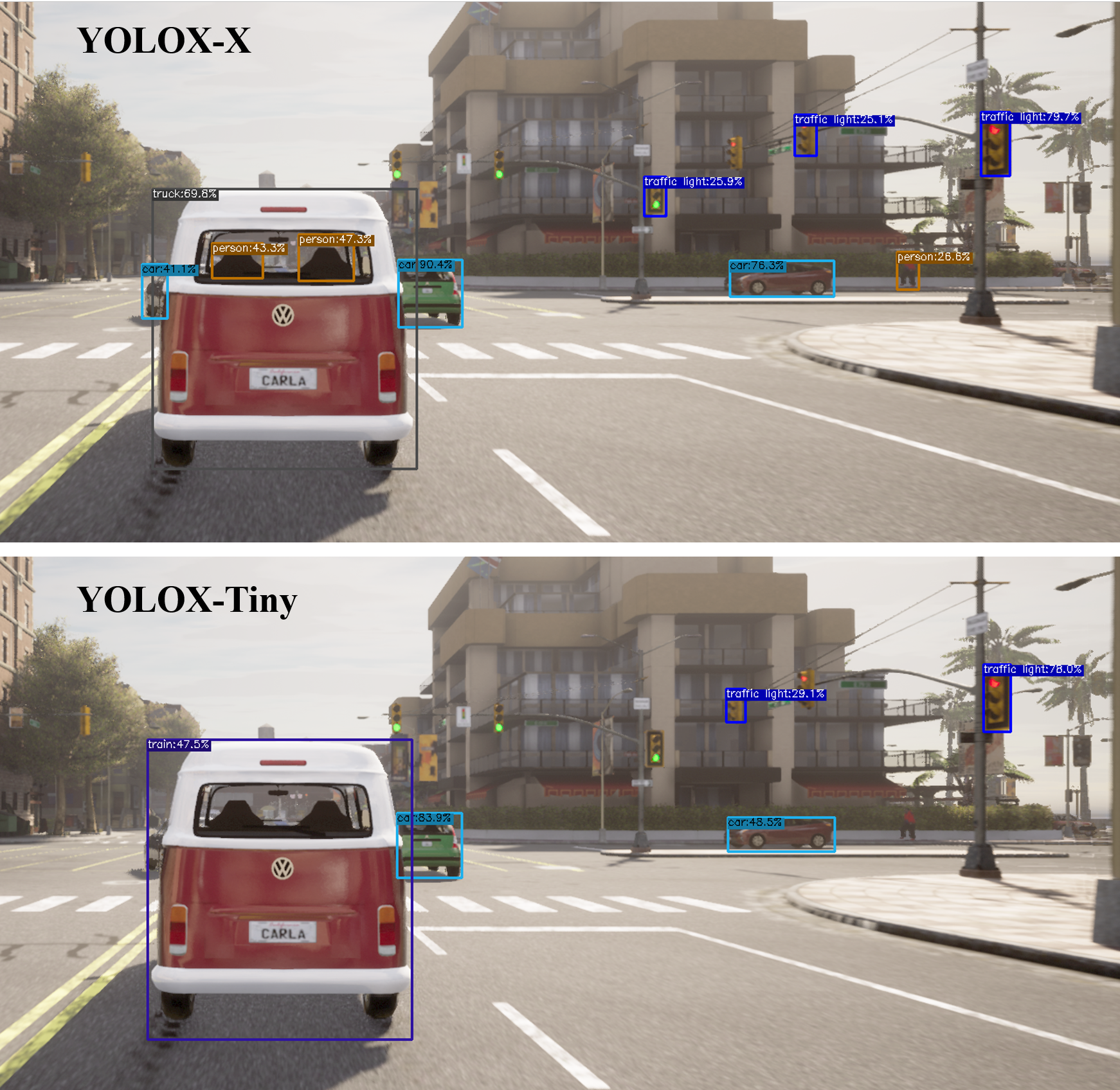}  
	\caption{Comparison of a large model, YOLOX-X\cite{ge2021yolox}, and a lightweight model, YOLOX-Tiny, on a CARLA-generated \cite{carla} scene. In the figures, the occluded vehicle in front and the pedestrian far to the right can only be detected by the larger model without the help of cooperative perception.}
	\label{yolox}
\end{figure}


In this work, we focus on the sensor sharing scheduling with raw-level cooperative perception, and propose an online learning-based algorithm to minimize the energy consumption.
To the best of our knowledge, this is the first paper addressing the scheduling problem of raw-level cooperative perception in vehicular networks. 
Our main contributions are summarized as follows:

1) We formulate the sensor sharing scheduling problem in cooperative perception as a variant of the Multi-Armed Bandit (MAB) problem, taking AI model performance, time-varying wireless channels, and power consumption into consideration.

2) An Adaptive Volatile Upper Confidence Bound (AVUCB) algorithm is proposed to minimize the total power consumption of the ego vehicle while satisfying delay and accuracy constraints. The upper bound of performance loss is derived.

3) Simulations are carried out for stationary and dynamic settings respectively, verifying the near-optimal performance of our proposed algorithm. Nearly $40\%$ of the energy consumption is saved compared to the random policy. 

\section{System Model} 
\subsection{System Overview}

\begin{figure}[!t]
	\centering
	\includegraphics[width=7.5cm]{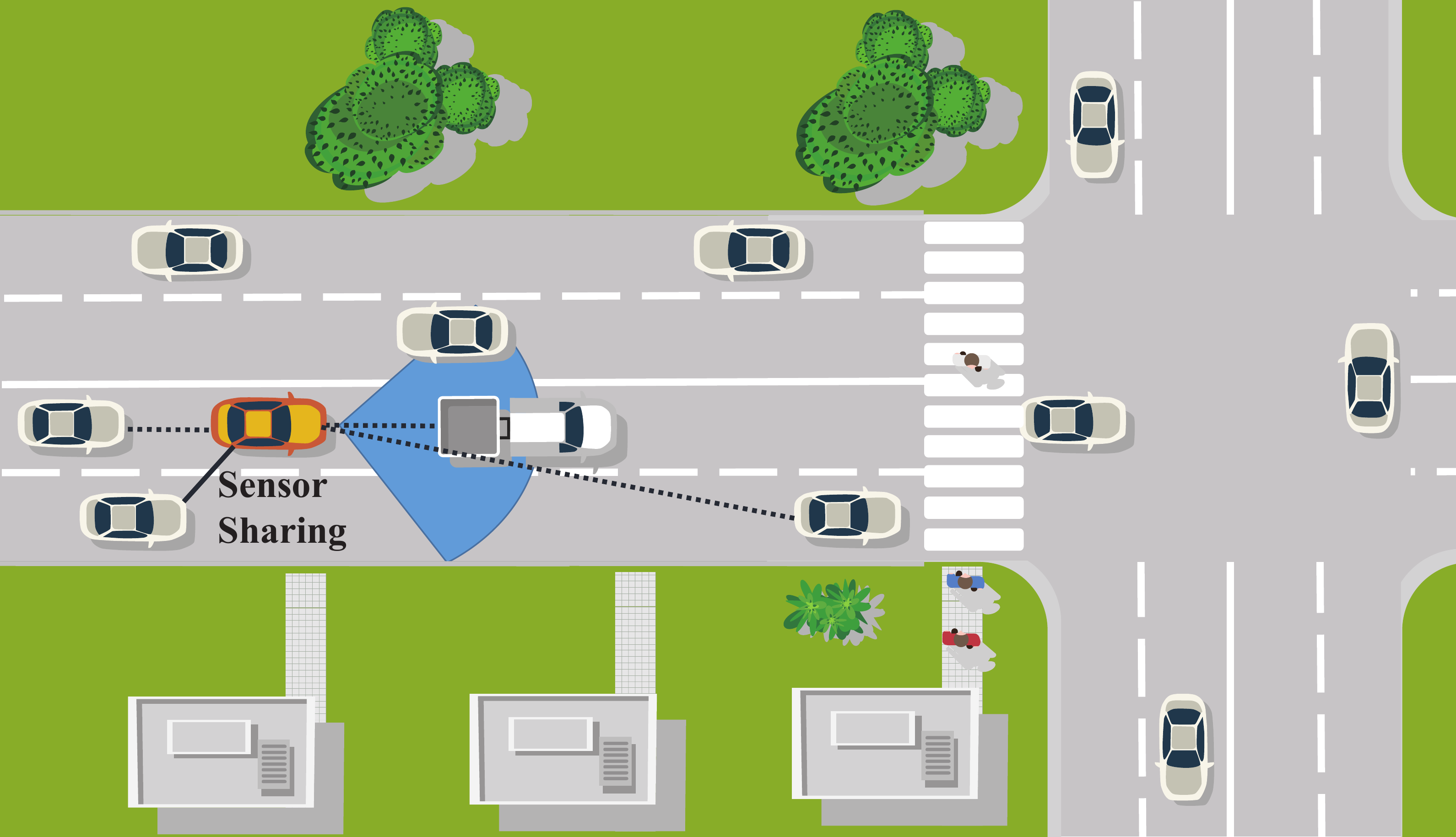}  
	\caption{An illustration of the cooperative perception. The ego vehicle (yellow car) schedules the transmission from cooperating vehicles (white cars) in the same direction to share their sensory data.}
\end{figure}

The system consists of a group of cooperating vehicles on a multi-lane street. 
A single-user scenario is considered: An L4/L5 autonomous driving vehicle needs to perceive the environment and distinguish the traffic around it using a cooperative perception framework. 
We consider a time slotted system with time index $t$ and the length of a time slot is $\tau$.
At time $t$, the nearby V2X-connected vehicle set is denoted by $\mathcal{N}(t)$.
The ego vehicle requests and receives sensor data from a vehicle $a_t\in\mathcal{N}(t)$, and then runs a neural network of dynamic complexity to merge the two sources and obtain the detection results. 
We assume that the nearby V2X-connected vehicles come and leave over time and $\mathcal{N}(t)\ne\emptyset$ for any $t$, otherwise the vehicle will perceive the environment by itself and run a relatively complex detection model to guarantee the detection performance.

The common performance metric for the object detectors is average precision (AP), measuring the accuracy of a certain neural network model when applied to a given dataset. 
By definition, AP summarizes the precision-recall curve by computing the mean precision percentage value for sweeping recall values. 
For safety consideration, the minimum AP $R_0$ should be met so that the vehicle can get enough environmental information and drive safely.

\subsection{Computation Model}
Assume that the autonomous vehicle is equipped with a smart computation hardware that can adapt its voltage and clock frequency to different workloads and save energy, by dynamic voltage and frequency scaling (DVFS) technique.
Let $L_t$ denote the computation load, measured in floating-point operations per second (FLOPS), and $f_t$ denote the clock frequency which corresponds to the computation capability. The computation time is given by
\begin{equation}
    T_t^\text{comp} = L_t/f_t.
\end{equation}

According to a common model of DVFS devices in edge computing\cite{guo2016energy}, the energy consumption is given by
\begin{equation}
    E_t^\text{comp} = \kappa L_t f_t^2 = \kappa L_t^3 / (T_t^\text{comp})^2,
\end{equation}
where $\kappa$ is the effective switched capacitance and it is hardware-specific. 
It can be seen from the equation that the energy consumption is proportional to the cubic of the workload and inversely proportional to the square of the given time.

\subsection{Communication Model}
We consider a Vehicle-to-Vehicle (V2V) wireless channel for the transmission of sensor data.
Assume that each transmission is constantly $D$ bits, since the input size is fixed given the detection model. According to Shannon's formula, the communication latency is
\begin{equation}
    T^{\text{comm}}_t(i) = \frac{D}{B\log_2\left(1+\frac{P h_t(i)}{\sigma^2}\right)},
\end{equation}
where $B$ is the bandwidth of the channel, $P$ is the transmitting power and $h_t(i)$ is the channel state between the nearby vehicle $i\in\mathcal{N}(t)$ and the ego vehicle at time $t$. The wireless channel states $h_t(i)$ are assumed to be independent over time, and their distributions are unknown to the ego vehicle. The energy required for communication is given by
\begin{equation}
    W_t^{\text{comm}}(i) = PT^{\text{comm}}_t(i).
\end{equation}


\subsection{Cooperative Perception Model}
To characterize different traffic scenarios, we define the \textit{context complexity} at time $t$ as $\omega_t$.
At crowded intersections or in bad weather, $\omega_t$ is higher since the objects are harder to identify due to more occlusions.
On the other hand, since the additional sensor data could provide a more complete view of occluded objects or better depth prediction for distant targets, it boosts the detection performance by a good margin. 
Define $\eta_{t}(i)\ge0$ as the \textit{performance gain} at time $t$ from shared view of nearby vehicle $i$. 
Then we can express the detection performance of the dynamic network as $g(L_t;\omega_t, \eta_t(a_t))$, measured by average precision. 

Since the relative positions among autonomous vehicles change at the rate of seconds, we assume that the performance gain from sensors of other vehicles $\eta_t(i)$ is quasi-stationary.
Specifically, we define an \textit{epoch} as the period of time in which the set of nearby vehicles are fixed, and the distribution of their performance gain $\eta_t(i)$ and channel state $h_t(i)$ does not change during an epoch. 
Additionally, the distribution of $\eta_t(i)$ is different across vehicles due to occlusion relationships, heterogeneity of sensor quality, and compatibility between systems. 
Note that a nearby vehicle with altered position, such as lane changes, will be re-identified as a new vehicle and triggers the start of an epoch. 

\section{Problem Formulation}
In this problem, the ego vehicle makes sequential decisions about which vehicle $a_t\in\mathcal{N}(t)$ to request for sensor sharing in order to minimize the total energy consumption subject to an accuracy constraint and a delay constraint over time $t=1,2,\dots,T$, i.e.
\begin{align}
    \min_{a_t\in\mathcal{N}(t)} & \sum_{t=1}^T E_t (a_t) = \sum_{t=1}^T \kappa L_t^3 / (T^{\text{comp}}_t)^2 + PT^{\text{comm}}_t(a_t) , \\
    \text{s.t.}  \quad & g(L_t ; \omega_t, \eta_t(a_t)) \ge R_0 ,\\
    \quad &  T^{\text{comm}}_t(a_t) + T^{\text{comp}}_t \le \tau .
\end{align}
At time $t$, the ego vehicle observes the context complexity $\omega_t$ from recent perception results, and chooses a vehicle to request sensor information. 
Upon receiving the sensor data, the dynamic neural network predicts the performance gain $\eta_t(a_t)$ and decides the computation load $L_t$ to satisfy the performance requirement.
Then the hardware runs at minimum computation frequency, for energy efficiency, to finish the task within the time slot $\tau$. 
For feasibility, we assume $T^{\text{comm}}_t(a_t) \le T^{\text{comm}}_\text{max} < \tau$ in the worst channel situation so that the inference can be completed on time.

Note that the ego vehicle does not know the channel state $h_t(i)$ and the performance gain $\eta_t(i)$ of the nearby vehicle $i\in\mathcal{N}(t)$ at the beginning of the time slot $t$.
It can only make decisions based on past information, which forms a variant of the Multi-Armed Bandit (MAB) problem\cite{slivkins2019introduction}.
Specifically, the ego vehicle makes sequential decisions on which vehicle to request for sensor sharing, observes and learns from past results to estimate the distribution of the channel state and the performance gain. 

The difference between our problem and the standard MAB problem lies in 3 aspects:
Firstly, we take into account the unreliability of wireless communications. 
Even if a nearby vehicle provides sensor data with the highest performance gain, it may not be optimal when its wireless channel is not good. 
Secondly, the context complexity $\omega_t$ is time-varying, which means the extra energy cost of not choosing the optimal vehicle varies over time. 
Thirdly, the volatility of nearby vehicles is considered, and thus the algorithm should be capable of adapting to the gradually changing environment.

\subsection{Offline Optimal Solution}
Suppose that we have access to the random distribution of performance gain and channel states of each vehicle. 
At time $t$, the remaining time for the inference task of each available nearby vehicle is given by
\begin{equation}
    \label{comp-time}
    T^{\text{comp}}_t(i) = \tau - \frac{D}{B\log\left(1+\frac{P h_t(i)}{\sigma^2}\right)}, \quad i\in\mathcal{N}(t).
\end{equation}
Further we derive the expected total energy consumption as
\begin{equation} \label{minus-square}
    \E[E_t(i)] =  \mathbb{E}\left[\frac{\kappa [g^{(-1)}(R_0;\omega_t, \eta_t(i))]^3}{T^{\text{comp}}_t(i)^2} + PT^{\text{comm}}_t(i)\right],
\end{equation}
where $g^{(-1)}(R_0;\omega_t, \eta_t(i))$ denotes the required computation load given the performance requirement. 
The expectation is taken over the probability distribution of performance gains $\eta_t(i)$ and channel states $h_t(i)$ of each vehicle, which is exactly what we aim to learn in the online scenario.
Finally, we choose the vehicle that minimizes the expected total energy consumption:
\begin{equation}
    a_t^* = \arg\min_{i\in\mathcal{N}(t)} \E[E_t(i)].
\end{equation}

\subsection{A Volatile Opportunistic MAB Problem}
To make the problem concrete, we empirically assume a logarithmic relationship between computation load and detection performance, i.e.
\begin{equation} \label{logrel}
    g(L_t ; \omega_t, \eta_t(a_t)) = m\log(1+n L_t)-\omega_t+\eta_t(a_t),
\end{equation}
where $m,n$ are parameters for the model, to be fit later in Section V. 
Note that the terms of context complexity and performance gain are directly added to the precision, representing the occluded targets and the additional information provided by the shared sensor information respectively.

Given the context complexity $\omega_t$ and the shared sensor data, the dynamic neural network adaptively decides the computation load as
\begin{equation}
    L_t \approx \frac{1}{n} e^{(R_0+\omega_t - \eta_t(a_t))/{m}},
\end{equation}
where the performance gain from another view, $\eta_t(a_t)$, is estimated by the network.

In practical settings, the energy cost of communication is usually negligible compared to computation in autonomous driving, and thus the energy consumption of detection task is
\begin{align}
    E_t(a_t) &= \frac{\kappa}{n^3} e^{\frac{3}{m} (R_0+\omega_t-\eta_t(a_t)} / \left(\tau-T_t^\text{comm}(a_t)\right)^2.
\end{align}

We define the weighting factor as $W_t=e^{\frac{3}{m}\omega_t}$, which decides the magnitude of energy consumption at time $t$, and define $X_t(a_t)=e^{-\frac{3}{m}\eta_t(a_t)} / \left(\tau-T_t^\text{comm}(a_t)\right)^2$, which relates to the wireless channel state and performance gain of the scheduled vehicle $a_t$.
Finally, the problem is translated into a combination of volatile MAB\cite{wu2018adaptive} and opportunistic MAB\cite{bnaya2013social} problem:
\begin{equation}
    \min_{a_t\in\mathcal{N}(t)}  \  \sum_{t=1}^T \E[{E_t(a_t)}] = \frac{\kappa}{n^3}e^{\frac{3}{m} R_0} \sum_{t=1}^T W_t \E\left[X_t(a_t)\right].
\end{equation}

\section{AVUCB Algorithm for Sensor Sharing}
Observe that when the context complexity $\omega_t$ is high, the scheduling decision is more important because a sub-optimal channel state and performance gain pair could lead to higher extra energy consumption. 
Intuitively, we prefer to take cautious action, to fully exploit known information rather than to explore. 

Based on the consideration above, we propose an adaptive volatile UCB (AVUCB) algorithm to balance the trade-off between exploration and exploitation in the proposed sensor sharing problem, as shown in Algorithm 1.

\begin{figure}[htbp]
    \renewcommand{\algorithmicrequire}{\textbf{Parameter:}}
    \begin{algorithm}[H]
    \caption{AVUCB Algorithm for Sensor Sharing}
    \begin{algorithmic}[1]
    	\REQUIRE $\beta$ 
    	\FOR {$t=1,\cdots,T$}
    	    \IF {Any candidate vehicle $i\in\mathcal{N}(t)$ has not shared sensor data with the ego vehicle}
                \STATE Send vehicle $i$ sharing request, i.e. $a_t=i$.
                \STATE Observe $T^\text{comm}_t(a_t)$ when receiving shared data.
                \STATE Feed the two sources to the dynamical network and complete computation within the time slot.
                \STATE Observe $\eta_t(a_t)$ from network output.
                \STATE Initialize: \ $\bar{X}_{t,a_t} = e^{ - \frac{3}{m} \eta_t(a_t)} / (\tau-T^{\text{comm}}_t(a_t))^2$, \\ $k_{t,a_t}=1,\ s_{a_t} = t$.
            \ELSE
            \STATE Observe $\omega_t$ from previous environmental perception.
            \STATE Calculate the normalized weighting factor: \begin{equation}
                \Tilde{W_t} = \max \left( \min \left( \frac{e^{\frac{3}{m}\omega_t}-e^{\frac{3}{m}\omega_L}}{e^{\frac{3}{m}\omega_H}-e^{\frac{3}{m}\omega_L}}, 1 \right), 0\right).
            \end{equation}
	        \STATE Calculate the optimistic cost function for each available nearby vehicle $i\in\mathcal{N}(t)$:
	            \begin{equation} \label{optcost}\begin{aligned}
    	            \Tilde{X}_{t,i} = 
    	            \bar{X}_{t-1,i} - \sqrt{\frac{2\beta(1-\Tilde{W_t})\log(t-s_i)}{k_{t-1,i}}}.
	            \end{aligned} \end{equation}
	        \STATE Send sensor sharing request to vehicle
	        \begin{equation}
	            a_t=\arg\min_{i\in\mathcal{N}(t)} \bar{X}_{t,i}.
	        \end{equation}
	        \STATE Observe $T^\text{comm}_t(a_t)$, run the dynamic network and observe $\eta_t(a_t)$ from network output.
	        \STATE Calculate the cost function:\begin{equation}
	            X_{t}(a_t) = e^{ - \frac{3}{m} \eta_t(a_t)} / (\tau-T^{\text{comm}}_t(a_t))^2.
	        \end{equation}
	        \STATE Update  $\bar{X}_{t,a_t} \gets \frac{\bar{X}_{t-1,a_t} k_{t-1,a_t}+X_{t}(a_t)}{k_{t-1,a_t}+1}$.
	        \STATE Update $k_{t,a_t}\gets k_{t-1,a_t}+1$.
	        \ENDIF
	   \ENDFOR
    \end{algorithmic}
    \end{algorithm}
\end{figure}
 
In lines 3-7, we explore the newly available candidate vehicle once as the initialization process, to deal with the volatility of the vehicular cluster.
Afterwards, in lines 9-12, we calculate the optimistic cost function, consisting of the empirical average cost and an exploration term, based on history observations, and choose the vehicle that minimizes it. 
Note that the exploration term is controlled by the normalized weighting factor that depends on $\omega_t$ and is truncated to satisfy $\Tilde{W_t}\in[0,1]$, incorporating context-awareness.
$\beta$ is the constant that controls level of exploration in UCB-based algorithms.
Finally, in lines 13-16, the empirical cost function and number of scheduled times are updated.

We analyze the performance of the proposed algorithm for the rest of the section. Denote the total number of epochs by $B$, the starting time and ending time of epochs by $t_b, t_b',\ b=1,\cdots,B$, then $\mathcal{N}(t) = \mathcal{N}_b$ for $t\in[t_b, t_b']$. 
Let $\mu_i=\E[X_{t,i}]$ denote the expected cost for vehicle $i\in\mathcal{N}_b$, and for optimal cooperating vehicle $a_b^*=\arg\min_{i\in\mathcal{N}_b} \mu_i$, $\mu_b^* = \min_{i\in\mathcal{N}_b} \mu_i$. 
The cumulative regret is given by
\begin{equation} \begin{aligned}
    R_t &= \E\left[ \sum_{b=1}^B \sum_{t = t_b}^{t_b'} W_t ( X_t(a_t)) - X_t(a^*_b) ) \right] \\ 
    &= \sum_{b=1}^B \sum_{t = t_b}^{t_b'} W_t \left[\mu_t(a_t)-\mu_b^*\right].
\end{aligned} \end{equation}

Let $\Delta_i=\mu_t(i)-\mu_b^*,\ t\in[t_b, t_b'],\ i\in\mathcal{N}_b$, we have the following upper bound for the cumulative regret. 
\begin{theorem}
    For $\sqrt{\beta}=1/(\tau-T^{\text{comm}}_\text{max})^2$, with random continuous $\omega_t$ and $\Pr\left\{\omega_t \le \omega_L\right\}=\rho > 0$, using AVUCB algorithm, the expected cumulative regret is upper bounded by
    \begin{equation}
        \E[R_T] \le \sum_{b=1}^B e^{\frac{3}{m}\omega_\text{max}}\left[\sum_{i\neq a_b^*} \frac{8\log(t_b'-t_b)}{\Delta_i}+O(1)\right].
    \end{equation}
\end{theorem}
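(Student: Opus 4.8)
The plan is to adapt the classical UCB1 regret analysis to the two nonstandard features here: the volatile epoch structure and the load-modulated exploration radius. First I would peel off the weights and the epochs. Since $\mathcal{N}(t)=\mathcal{N}_b$ is constant on each epoch and $W_t=e^{\frac{3}{m}\omega_t}\le e^{\frac{3}{m}\omega_\text{max}}$, writing $k_i^{(b)}$ for the number of rounds of epoch $b$ on which a suboptimal arm $i\neq a_b^*$ is requested,
\begin{equation}
    \E[R_T]\le e^{\frac{3}{m}\omega_\text{max}}\sum_{b=1}^B\sum_{i\neq a_b^*}\Delta_i\,\E\!\left[k_i^{(b)}\right].
\end{equation}
It then suffices to prove $\E[k_i^{(b)}]\le 8\log(t_b'-t_b)/\Delta_i^2+O(1)$ for each suboptimal arm, since multiplying by $\Delta_i$ reproduces the per-epoch bracket in the statement, with the single initialization pull of line~3 absorbed into $O(1)$.

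Next I would set up the concentration. Because $\eta_t(i)\ge 0$ and $T_t^{\text{comm}}(i)\le T^{\text{comm}}_\text{max}$, the cost $X_t(i)$ lies in $[0,\sqrt{\beta}]$, and the choice $\sqrt{\beta}=1/(\tau-T^{\text{comm}}_\text{max})^2$ is precisely its range; a Hoeffding bound calibrated to this range makes the \emph{full} radius $c_{t,i}=\sqrt{2\beta\log(t-s_i)/k_{t-1,i}}$ a legitimate confidence half-width, so each deviation event $\{\bar{X}_{t-1,i}<\mu_i-c_{t,i}\}$ and $\{\bar{X}_{t-1,a_b^*}>\mu_b^*+c_{t,a_b^*}\}$ has probability at most $(t-s_i)^{-4}$, and summing these tails over the epoch yields the $O(1)$ term. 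On the complementary clean event, the standard argument applies: a pull of $i$ under the full radius forces $\Delta_i\le 2c_{t,i}$, hence $k_{t-1,i}\le 8\log(t_b'-t_b)/\Delta_i^2$, which caps the count. Crucially, the algorithm's actual radius is the full one scaled by $\sqrt{1-\tilde{W}_t}\le 1$, so the optimism is never larger than in UCB1 and this counting bound is never loosened by high-weight rounds.

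The hard part---and the reason the assumption $\Pr\{\omega_t\le\omega_L\}=\rho>0$ is indispensable---is the converse danger of \emph{under}-exploration. A radius shrunk by $\sqrt{1-\tilde{W}_t}$ is no longer a valid confidence interval, so when $\tilde{W}_t$ is near $1$ the clean-event inequality can fail in the other direction: a near-greedy round may request a suboptimal arm merely because the optimal arm's empirical mean has not yet concentrated, and such a slip is costly precisely because $W_t$ is then large. I would close this by isolating the low-weight rounds $\{\omega_t\le\omega_L\}$, on which the truncation forces $\tilde{W}_t=0$ and the update is exactly UCB1. These rounds occur i.i.d.\ with probability $\rho$, so a Chernoff bound guarantees that within the first $\Theta(\log(t-t_b))$ rounds of the epoch every arm receives $\Omega(\rho\log(t-t_b))$ full-exploration pulls, forcing every empirical mean inside its band with the same $(t-s_i)^{-4}$ tail. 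Conditioned on this sufficient-exploration event, every high-weight round selects $a_b^*$ as soon as $\Delta_i$ exceeds the now-tiny radius, so these rounds contribute no pulls beyond the $8\log(t_b'-t_b)/\Delta_i^2$ already charged, and the residual failure probability folds into $O(1)$. Assembling the clean-event count, the Hoeffding tails, and the initialization cost, then summing over the $B$ epochs, yields the stated bound.
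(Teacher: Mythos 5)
Your outer reduction is exactly the paper's: bound $W_t\le e^{\frac{3}{m}\omega_\text{max}}$, split by epochs, reduce the regret to $\sum_b\sum_{i\ne a_b^*}\Delta_i\E[k_i^{(b)}]$, and use $\sqrt{\beta}=1/(\tau-T^{\text{comm}}_\text{max})^2$ to normalize the cost to $[0,1]$. The difference is what happens next: the paper does \emph{not} prove the per-epoch count bound $\E[k_i^{(b)}]\le 8\log(t_b'-t_b)/\Delta_i^2+O(1)$ at all --- after normalization it declares the problem equivalent to the opportunistic/adaptive bandit setting and imports Lemma 5 of \cite{wu2018adaptive} verbatim. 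You attempt to prove that lemma in-line, and you correctly diagnose the crux (a radius shrunk by $\sqrt{1-\tilde{W}_t}$ is not a valid confidence width, and $\Pr\{\omega_t\le\omega_L\}=\rho>0$ is what must rescue the argument), which is more than the paper makes explicit.

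However, your in-line proof has a genuine gap at precisely the step where the imported lemma does its real work. First, your second paragraph's claim that on the clean event ``a pull of $i$ forces $\Delta_i\le 2c_{t,i}$'' and that this ``is never loosened by high-weight rounds'' is false: writing $\alpha_t=\sqrt{1-\tilde{W}_t}$, a pull of $i$ on the clean event only yields
\begin{equation}
\Delta_i \;\le\; (1+\alpha_t)\,c_{t,i} \;+\; (1-\alpha_t)\,c_{t,a_b^*},
\end{equation}
and the extra term $(1-\alpha_t)c_{t,a_b^*}$ does not vanish on the clean event --- it vanishes only when the \emph{optimal} arm has been pulled many times. (Your third paragraph contradicts your second on exactly this point.) Second, the patch does not supply enough pulls to kill that term: Chernoff indeed gives $\Theta(\rho\log(t-t_b))$ full-exploration rounds among the first $\Theta(\log(t-t_b))$ rounds, but $k_{a_b^*}=\Omega(\rho\log(t-t_b))$ pulls only shrinks $c_{t,a_b^*}=\sqrt{2\beta\log(t-s_{a_b^*})/k_{a_b^*}}$ to a \emph{constant} of order $\sqrt{\beta/\rho}$, which is as large as the entire range $[0,\sqrt{\beta}]$ of the costs --- ``forcing every empirical mean inside its band'' is a statement about tail probability, but what you need is small band \emph{width}, and width of order $\sqrt{\beta/\rho}$ makes the displayed inequality vacuous, so no pull count is capped. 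To close the argument you must show the optimal arm accrues pulls at a \emph{linear} rate $\Omega(\rho(t-t_b))$ during the low-weight (genuine UCB1) rounds --- which holds because, once suboptimal pulls in those rounds are capped by the standard argument, almost every low-weight round pulls $a_b^*$ --- so that $c_{t,a_b^*}$ falls below $\Delta_{\min}/2$ after a burn-in whose length is bounded independently of the epoch length, with the burn-in mis-selections absorbed into $O(1)$. A threshold at $\Theta(\log)$ rounds, as you propose, cannot give the stated bound even after repair: the pre-threshold high-weight rounds could all be mis-selections, adding an $O(\log(t_b'-t_b))$ term with a $1/\rho$-dependent constant rather than $O(1)$. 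That linear-accrual argument, plus the bookkeeping for intermediate $\tilde{W}_t\in(0,1)$, is the substance of Lemma 5 in \cite{wu2018adaptive}; as written, your proof does not contain it.
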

    \begin{proof}
    In epoch $b$, we first derive a bound for the number of times vehicle $i\in\mathcal{N}_b$ is scheduled. Divide the optimistic cost function \eqref{optcost} by $\sqrt{\beta}$, then the cost function is normalized to $[0,1]$ because for any $t$ and vehicle $i$,
    \begin{equation}
            X_{t}(i) / \sqrt{\beta} = e^{ -\frac{3}{m} \eta_t(i)} \left(\frac{\tau-T^{\text{comm}}_\text{max}} {\tau -T_i^{\text{comm}}}\right)^2  \le 1 .
    \end{equation}
    Now the problem is equivalent to the standard adaptive MAB problem defined in \cite{wu2018adaptive} except for that we have better empirical estimation, due to the observations to vehicles that exist even before epoch $b$, which only makes the concentration property stricter. Define $\E[k_{t,i}^{(b)}]$ as the number of times vehicle $i$ is scheduled inside epoch $b$. By Lemma 5 in \cite{wu2018adaptive}, we have
    \begin{equation}
        \E[k_{t,i}^{(b)}] \le \frac{8\log(t_b'-t_b)}{\Delta_i^2} + O(1).
    \end{equation}
    Finally, we calculate the cumulative regret as
    \begin{equation} \begin{aligned}
        E[R_T] &= \sum_{b=1}^B \sum_{t = t_b}^{t_b'} W_t \E \left[  \mu_t(a_t)-\mu_b^* \right] \\
        &\le \sum_{b=1}^B e^{\frac{3}{m}\omega_\text{max}} \sum_{i\neq a_b^*} \Delta_i \E[k_{t,i}^{(b)}] \\
        &\le \sum_{b=1}^B e^{\frac{3}{m}\omega_\text{max}} \sum_{i\neq a_b^*} \Delta_i \left[\frac{8\log(t_b'-t_b)}{\Delta_i^2} + O(1)\right] \\
        &= \sum_{b=1}^B e^{\frac{3}{m}\omega_\text{max}}\left[\sum_{i\neq a_b^*} \frac{8\log(t_b'-t_b)}{\Delta_i}+O(1)\right].
    \end{aligned} \end{equation}
    \end{proof}

The theorem implies that our proposed AVUCB algorithm can achieve a logarithmic performance loss, $O(B\log(T))$ in specific, compared to the offline optimal solution which knows in prior the channel and gain distributions of all vehicles.

\section{Simulation Results}
In this section, we first specify the parameters of the model, and then evaluate the energy consumption of the proposed AVUCB algorithm through simulations of a stationary and a dynamic scenario.

\subsection{Fitting of Parameters}
We choose the MS-COCO dataset\cite{lin2014mscoco} as the baseline for context complexity to evaluate detection performance. 
We fit $m,n$ in \eqref{logrel} with YOLOX\cite{ge2021yolox} and YOLOR\cite{wang2021yoloR}, two state-of-the-art object detection models and obtain the best fit as
\begin{equation}
    g(L_t ; \omega_t, \eta_t(a_t)) = 4.695\log(1+200.9L_t)-\omega_t+\eta_t(a_t),
\end{equation}
with goodness $R^2=0.99$, as shown in Fig. \ref{fit}. 
Moreover, the effective switched capacitance is set as $\kappa = 0.98 \mathrm{W} \cdot s^{2} \cdot (\mathrm{TFLOP})^{-3}$, where $\mathrm{TFLOP}$ represents Tera  Floating-point Operations, based on the power and computation capability statistics of the Turing GPU in Nvidia Drive AGX Pegasus Platform \cite{pegasus}. 

\begin{figure}[htbp]
	\centering
	\includegraphics[width=8cm]{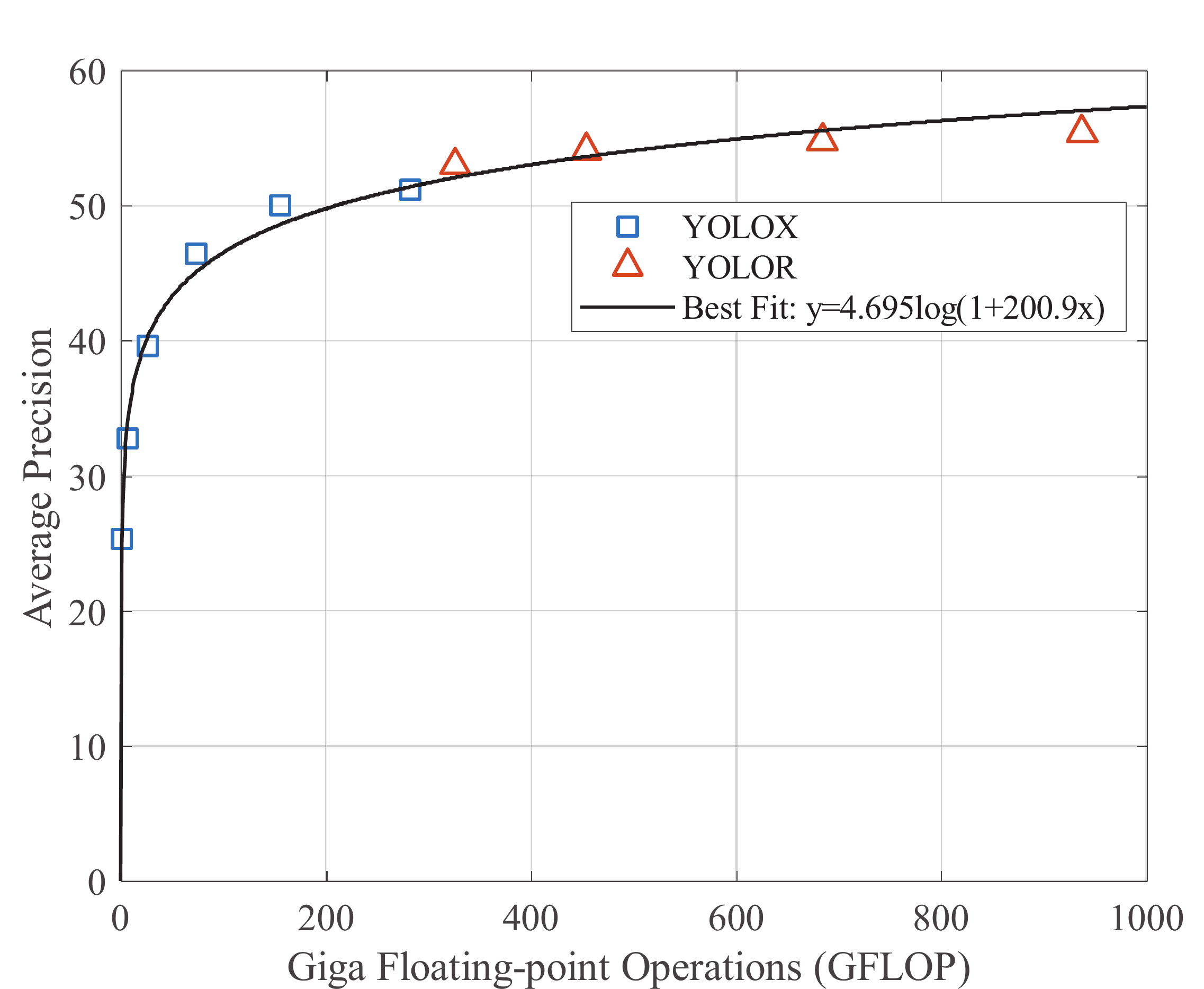}
	\caption{A logarithmic fit for computation load and detection performance based on models of YOLOX\cite{ge2021yolox} and YOLOR\cite{wang2021yoloR} with different sizes.}
	\label{fit}
\end{figure}

\subsection{Simulation under Stationary Setting}
In the simulation, we specify the minimum average precision $R_0=55$ for the baseline complexity.
In complicated traffic scenarios, such as the intersection, the context complexity is set as $\omega^{(1)}=+2$, while in simple traffic we set $\omega^{(2)}=-2$. 
We model the dynamics of the traffic scenario as a two-state Markov chain.
Suppose a complex scenario is crossed for $3.0$ seconds and a simple road is passed for $6.0$ seconds on average, then the transition interval is exponential distribution with mean values $3.0$ and $6.0$ seconds respectively. 

Similarly, we assume a two-state Markovian channel model for V2V communications, representing the Line-of-Sight (LoS) or Non-Line-of-Sight (NLoS) channel between the ego vehicle and cooperative vehicles. 
According to the V2V channel model in \cite{2016LOSboban}, the channel states of LoS and NLoS are $h_\mathrm{LoS}=-85\mathrm{dB}$, $h_\mathrm{NLoS}=-100\mathrm{dB}$ at the distance of about $100$ meters, respectively. 
The channel transition interval is again an exponential distribution, with a mean value of $1.0$ second. 
The transmitting power is $P =0.1$W, the bandwidth is specified as $B=10\mathrm{MHz}$ and the size of sensor data is $D = 2\mathrm{MB}$, the typical compressed size of a 1080p HD picture. 
The time slot is set as $\tau=50\mathrm{ms}$. Under this parameter setting, the energy consumed by communication can be neglected by orders of magnitude. 

The average performance gain of each candidate vehicle is uniformly distributed, i.e. $\eta_\mathrm{avg}(i)\sim\mathcal{U}[0,5]$, and the actual performance gain at time $t$ is given by $\eta_t(i)= \max(0, \eta_\mathrm{avg}(i) + \mathcal{N}[0, \sigma^2])$, where the randomness of the gain stems from the longitudinal relative movements and the frequent changes in occlusion relationship, and the standard deviation is specified as $\sigma = 2$.

In the stationary setting, we assume that there are 10 V2X-connected vehicles in the neighborhood of the ego vehicle, and the horizontal relative positions of nearby vehicles are fixed for $60$ seconds, when the average performance gain $\eta_{\max}(i)$ are constant. 
The proposed AVUCB algorithm is compared to several baselines: random policy, $\epsilon$-greedy algorithm, vanilla UCB algorithm and offline optimal solution. We generate $10^6$ traces for Monte Carlo experiments, and obtain the results in Fig. \ref{energy-stationary}.

\begin{figure}[!t]
	\centering
	\includegraphics[width=7.8cm]{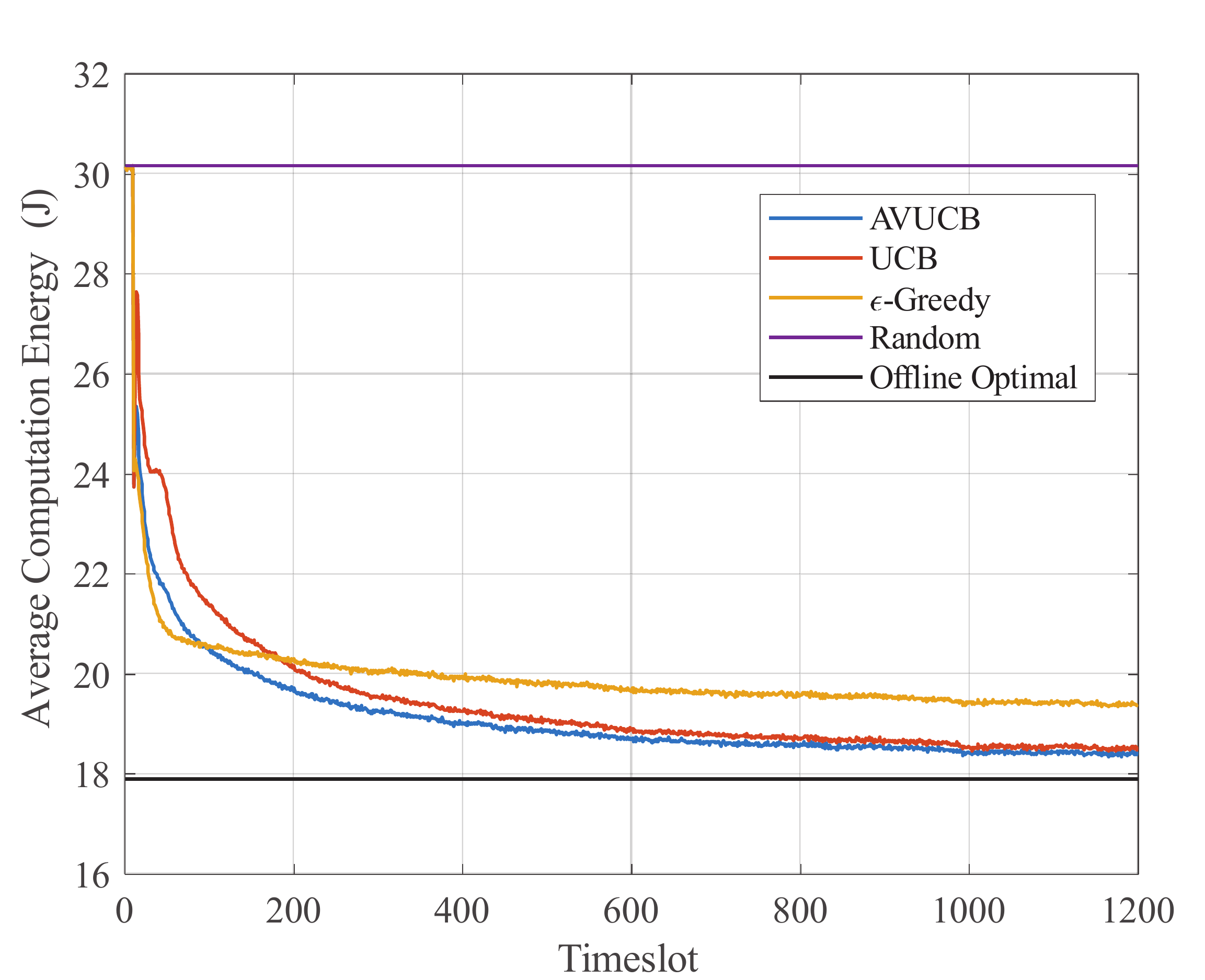}
	\caption{Comparison of average energy consumption in the stationary setting. }
	\label{energy-stationary}
\end{figure}


The results show that all learning algorithms can converge to low energy consumption after only a few hundreds of time slots and beat the random policy. Among them, the proposed AVUCB algorithm explores more cautiously than classic UCB at the early exploration stage and learns the underlying distributions better than $\epsilon$-greedy algorithm, striking a good balance between exploration and exploitation. With the proposed algorithm, the ego vehicle consumes an average power of less than $19\text{J}\times20\text{fps} = 380W$ and saves nearly $40\%$ of the energy consumption compared to the random policy. 

\subsection{Simulation under Dynamic Setting}
To investigate the influence of arrivals and departures of candidate vehicles and time-varying traffic scenarios, we compare the AVUCB algorithm with baseline algorithms in a specific dynamic setting. 
We pick a period of the synthetic scenario and simulate with the same parameters as in the stationary setting. 

\begin{figure}[!t]
	\centering
	\includegraphics[width=7.8cm]{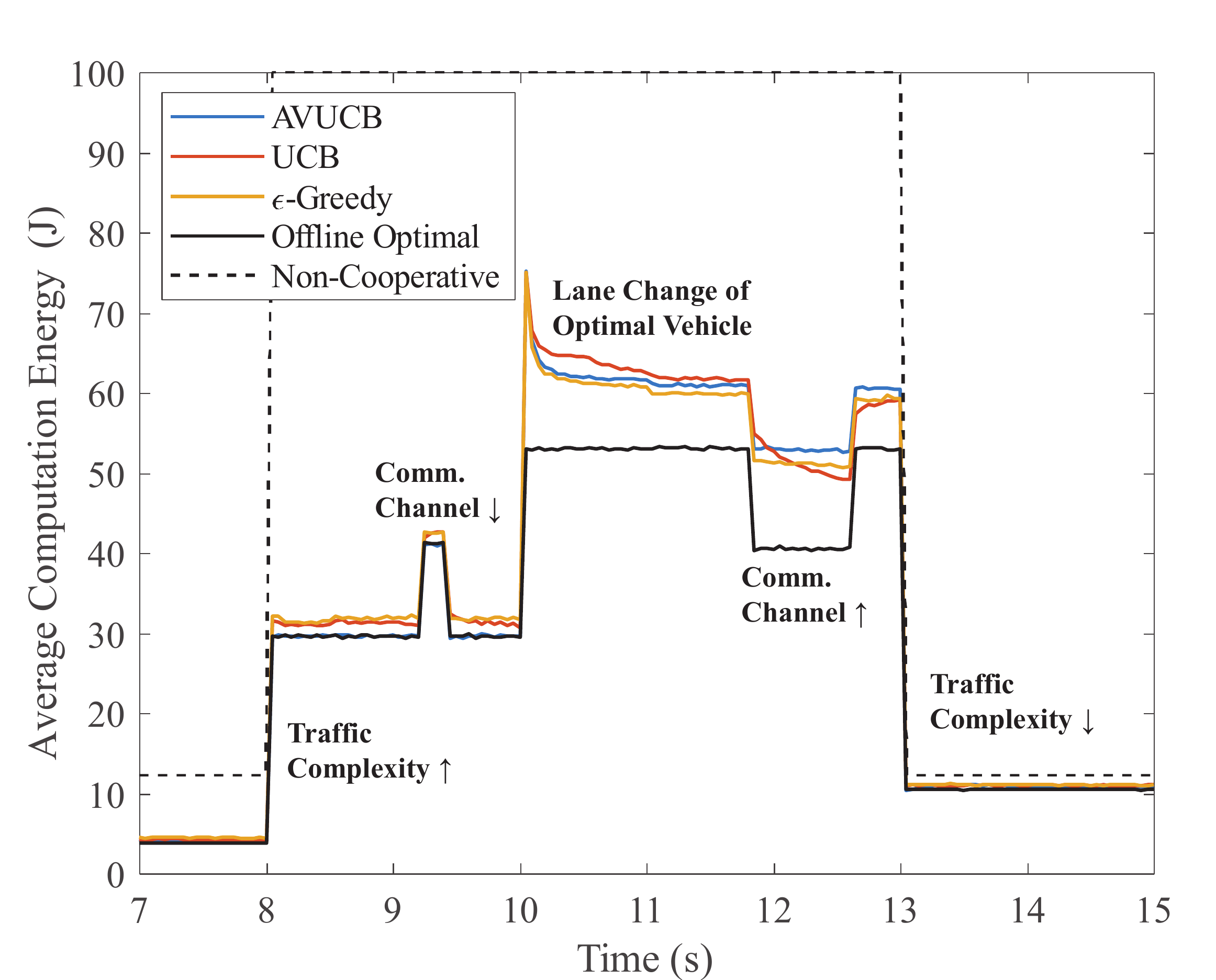}
	\caption{Comparison of average energy consumption in the dynamic setting.}
	\label{dynamic}
\end{figure}

Fig. \ref{dynamic} shows the average energy consumption in the dynamic scenario. 
It can be seen that all the learning algorithms are near-optimal after the initial learning period.
At $8$ s, the traffic complexity goes up, where AVUCB stops exploring and sticks to the empirically optimal vehicle to save energy. 
When the optimal vehicle changes lanes at $10$ s and is re-labeled as a sub-optimal vehicle, AVUCB greedily exploits the previously learned optimal vehicle to avoid risks, leaving the exploration to the future time of lower traffic complexity, i.e., after $13$ s.

\section{Conclusions}
In this work, we have studied the sensor sharing scheduling problem in raw-level cooperative perception and formulated a variant of the MAB problem with the purpose of minimizing energy consumption. We have proposed an online learning-based AVUCB algorithm that achieves logarithmic performance loss asymptotically compared to the offline optimal solution and then verified the effectiveness of the proposed algorithm with simulations. 
Future directions include considering the multi-user scenario and conducting more realistic experiments with traffic simulators such as SUMO and CARLA for a fine-grain model of detection.

\bibliographystyle{IEEEtran}
\bibliography{IEEEabrv, bibliography}

\begin{thebibliography}{10}
\providecommand{\url}[1]{#1}
\csname url@samestyle\endcsname
\providecommand{\newblock}{\relax}
\providecommand{\bibinfo}[2]{#2}
\providecommand{\BIBentrySTDinterwordspacing}{\spaceskip=0pt\relax}
\providecommand{\BIBentryALTinterwordstretchfactor}{4}
\providecommand{\BIBentryALTinterwordspacing}{\spaceskip=\fontdimen2\font plus
\BIBentryALTinterwordstretchfactor\fontdimen3\font minus
  \fontdimen4\font\relax}
\providecommand{\BIBforeignlanguage}[2]{{%
\expandafter\ifx\csname l@#1\endcsname\relax
\typeout{** WARNING: IEEEtran.bst: No hyphenation pattern has been}%
\typeout{** loaded for the language `#1'. Using the pattern for}%
\typeout{** the default language instead.}%
\else
\language=\csname l@#1\endcsname
\fi
#2}}
\providecommand{\BIBdecl}{\relax}
\BIBdecl

\bibitem{zeadally2020vehicular}
S.~Zeadally, M.~A. Javed, and E.~B. Hamida, ``Vehicular communications for
  {ITS}: {Standardization} and challenges,'' \emph{IEEE Commun. Standards
  Mag.}, vol.~4, no.~1, pp. 11--17, 2020.

\bibitem{yang2021machine}
Q.~Yang, S.~Fu, H.~Wang, and H.~Fang, ``Machine-learning-enabled cooperative
  perception for connected autonomous vehicles: {Challenges} and
  opportunities,'' \emph{IEEE Netw.}, vol.~35, no.~3, pp. 96--101, 2021.

\bibitem{chen2019cooper}
Q.~Chen, S.~Tang, Q.~Yang, and S.~Fu, ``Cooper: {Cooperative} perception for
  connected autonomous vehicles based on {3D} point clouds,'' in \emph{Proc.
  IEEE Int. Conf. Distrib. Comput. Syst. (ICDCS)}, Jul. 2019, pp. 514--524.

\bibitem{wang2020v2vnet}
T.-H. Wang, S.~Manivasagam, M.~Liang, B.~Yang, W.~Zeng, and R.~Urtasun,
  ``{V2VNet}: {Vehicle-to-vehicle} communication for joint perception and
  prediction,'' in \emph{Proc. Eur. Conf. Comput. Vision (ECCV)}, Aug. 2020,
  pp. 605--621.

\bibitem{kim2015impact}
S.~Kim, W.~Liu, M.~H. Ang, E.~Frazzoli, and D.~Rus, ``The impact of cooperative
  perception on decision making and planning of autonomous vehicles,''
  \emph{IEEE Intell. Transport. Syst. Mag.}, vol.~7, no.~3, pp. 39--50, 2015.

\bibitem{higuchi2019value}
T.~Higuchi, M.~Giordani, A.~Zanella, M.~Zorzi, and O.~Altintas,
  ``Value-anticipating {V2V} communications for cooperative perception,'' in
  \emph{Proc. IEEE Intell. Veh. Symp. (IV)}, Jun. 2019, pp. 1947--1952.

\bibitem{etsi}
``{Intelligent transport systems (ITS); Vehicular communications; Basic set of
  applications; Analysis of the Collective Perception Service (CPS); Release
  2},'' \emph{ETSI TR 103 562 V2.1.1}, 2019.

\bibitem{sun2019adaptive}
Y.~Sun, X.~Guo, J.~Song, S.~Zhou, Z.~Jiang, X.~Liu, and Z.~Niu, ``Adaptive
  learning-based task offloading for vehicular edge computing systems,''
  \emph{IEEE Trans. Veh. Technol.}, vol.~68, no.~4, pp. 3061--3074, 2019.

\bibitem{liu2020computing}
L.~Liu, S.~Lu, R.~Zhong, B.~Wu, Y.~Yao, Q.~Zhang, and W.~Shi, ``Computing
  systems for autonomous driving: {State} of the art and challenges,''
  \emph{IEEE Internet Things J.}, vol.~8, no.~8, pp. 6469--6486, 2020.

\bibitem{ge2021yolox}
Z.~Ge, S.~Liu, F.~Wang, Z.~Li, and J.~Sun, ``{YOLOX}: Exceeding {YOLO} series
  in 2021,'' \emph{arXiv:2107.08430}, 2021.

\bibitem{han2021dynamic}
Y.~Han, G.~Huang, S.~Song, L.~Yang, H.~Wang, and Y.~Wang, ``Dynamic neural
  networks: {A} survey,'' \emph{arXiv:2102.04906}, 2021.

\bibitem{carla}
A.~Dosovitskiy, G.~Ros, F.~Codevilla, A.~Lopez, and V.~Koltun, ``{CARLA}: {An}
  open urban driving simulator,'' in \emph{Proc. 1st Annu. Conf. Robot Learn.},
  Nov. 2017.

\bibitem{guo2016energy}
S.~Guo, B.~Xiao, Y.~Yang, and Y.~Yang, ``Energy-efficient dynamic offloading
  and resource scheduling in mobile cloud computing,'' in \emph{Proc. IEEE Int.
  Conf. on Comp. Commun. (INFOCOM)}, Apr. 2016.

\bibitem{slivkins2019introduction}
A.~Slivkins, ``Introduction to multi-armed bandits,'' \emph{arXiv:1904.07272},
  2019.

\bibitem{wu2018adaptive}
H.~Wu, X.~Guo, and X.~Liu, ``Adaptive exploration-exploitation tradeoff for
  opportunistic bandits,'' in \emph{Proc. Int. Conf. Mach. Learn. (ICML)}, Jul.
  2018, pp. 5306--5314.

\bibitem{bnaya2013social}
Z.~Bnaya, R.~Puzis, R.~Stern, and A.~Felner, ``Social network search as a
  volatile multi-armed bandit problem,'' \emph{{HUMAN}}, vol.~2, no.~2, pp.
  84--98, 2013.

\bibitem{lin2014mscoco}
T.-Y. Lin, M.~Maire, S.~Belongie, J.~Hays, P.~Perona, D.~Ramanan,
  P.~Doll{\'a}r, and C.~L. Zitnick, ``Microsoft {COCO}: Common objects in
  context,'' in \emph{Proc. 16th Eur. Conf. Comput. Vision (ECCV)}, Aug. 2014,
  pp. 740--755.

\bibitem{wang2021yoloR}
C.~Wang, I.~Yeh, and H.~M. Liao, ``You only learn one representation: Unified
  network for multiple tasks,'' \emph{arXiv:2105.04206}, 2021.

\bibitem{pegasus}
\BIBentryALTinterwordspacing
{NVIDIA DRIVE AGX Platform}. [Online]. Available:
  \url{https://developer.nvidia.cn/drive/drive-agx}
\BIBentrySTDinterwordspacing

\bibitem{2016LOSboban}
M.~Boban, X.~Gong, and W.~Xu, ``Modeling the evolution of line-of-sight
  blockage for {V2V} channels,'' in \emph{Proc. IEEE 84th Veh. Technol. Conf.
  (VTC-Fall)}, Sept. 2016.

\end{thebibliography}

\end{document}